\newtheorem{assumption}{Assumption}
\newtheorem{theorem}{Theorem}
\newtheorem{lemma}{Lemma}
\newtheorem{proof}{Proof}
\tiny\color{gray},
\title{Training-Free Multi-Step Audio Source Separation}
\author{%
  Yongyi Zang\\
  Independent Researcher\\
  Seattle, WA, USA\\
  \texttt{zyy0116@gmail.com} \\
  \And
  Jingyi Li \\
  University of Illinois Urbana-Champaign \\
  Champaign, IL, USA \\
  \texttt{jingyi49@illinois.edu} \\
    \And
  Qiuqiang Kong\thanks{Corresponding author.} \\
  The Chinese University of Hong Kong \\
  Hong Kong ASR, China \\
  \texttt{qqkong@ee.cuhk.edu.hk} \\
}
\begin{document}

\maketitle

\begin{abstract}
Audio source separation aims to separate a mixture into target sources. Previous audio source separation systems usually conduct one-step inference, which does not fully explore the separation ability of models. 
In this work, we reveal that pretrained one-step audio source separation models can be leveraged for multi-step separation without additional training. We propose a simple yet effective inference method that iteratively applies separation by optimally blending the input mixture with the previous step's separation result. At each step, we determine the optimal blending ratio by maximizing a metric. We prove that our method always yield improvement over one-step inference, provide error bounds based on model smoothness and metric robustness, and provide theoretical analysis connecting our method to denoising along linear interpolation paths between noise and clean distributions—a property we link to denoising diffusion bridge models. Our approach effectively delivers improved separation performance as a "free lunch" from existing models.
Our empirical results demonstrate that our multi-step separation approach consistently outperforms one-step inference across both speech enhancement and music source separation tasks, and can achieve scaling performance similar to training a larger model, using more data, or in some cases employing a multi-step training objective. These improvements appear not only on the optimization metric during multi-step inference, but also extend to nearly all non-optimized metrics (with one exception). We also discuss limitations of our approach and directions for future research.
\end{abstract}

\section{Introduction} \label{sec:intro}

\begin{quote}
\textit{``In solving a problem, we often introduce auxiliary problems that were not part of the original statement but whose solution facilitates the solution of the original.''} 

\hfill \textit{--- Herbert Simon~\cite{simon1996sciences}}
\end{quote}

The remarkable progress in deep learning has been largely driven by scaling model capacity and training data, yielding breakthroughs across natural language processing, computer vision, and speech recognition. While traditional improvements follow \emph{scaling laws} \cite{kaplan2020scaling} focused on training-time resources, recent work has explored a complementary paradigm: \emph{inference-time scaling} \cite{ma2025inference}. This approach allocates additional compute during inference to enhance model performance without requiring larger architectures or expanded training datasets.

Inference-time scaling has proven particularly effective in domains with inherent multi-step structures. Diffusion models \cite{ho2020denoising, sohl2015deep} and flow matching \cite{lipman2022flow, liu2022flow} exemplify this paradigm, achieving superior performance through iterative refinement. However, these models require specialized training procedures and substantial computational overhead. This limitation motivates the development of \emph{training-free} methods that enable models trained with one-step objectives to benefit from multi-step inference. Recent successes include chain-of-thought prompting \cite{wei2022chain} in language models and process reward-based reasoning \cite{lightman2023let, luo2024improve, gao2023scaling} for selecting optimal inference paths.

In this work, we introduce the first training-free inference-time scaling method for audio source separation—the challenging task of isolating target audio signals from complex mixtures containing environmental noise, musical interference, or overlapping speech. Audio source separation is fundamental to numerous applications including hearing aids, music production, and speech recognition systems. While existing approaches train models to directly map noisy inputs to clean outputs in a single step, we demonstrate that these models possess latent capabilities for iterative refinement that can be unlocked without additional training.

Our key insight stems from the observation that standard training procedures for audio separation models—which synthesize noisy mixture through clean signals mixed with random interference signals at varying levels—inadvertently create models capable of operating across diverse noise conditions. We exploit this property through a novel inference strategy: at each step, we generate multiple candidate solutions by remixing the current output with the original noisy input at different ratios. These candidates are processed through the model and ranked using task-specific quality metrics, with the best output selected for the next iteration. Through this simple approach, we effectively transform one-step models into multi-step systems, achieving performance gains without architectural modifications or retraining.

We make the following contributions:
\begin{itemize}
    \item We propose the first training-free inference-time scaling method for audio source separation, demonstrating consistent improvements across diverse separation tasks.
    \item We provide theoretical analysis establishing performance bounds of our method based on network smoothness characterized by their Lipschitz constants and metric robustness (Section~\ref{sec:method}), and connecting our surprisingly simple approach to the training process of diffusion denoising bridge models (Section~\ref{sec:bridge_connection}), providing justification for its effectiveness.
    \item We empirically validate our method across two audio separation subdomains: speech enhancement (Section~\ref{ssec:se}) and music source separation (Section~\ref{ssec:mss}).
    \item We open-source our implementation under the Apache 2.0 license at \url{https://github.com/yongyizang/TrainingFreeMultiStepASR}, enabling reproducibility and further research.
\end{itemize}

Our results demonstrate that inference-time scaling represents a powerful yet underexplored paradigm for audio processing, offering immediate performance gains for existing models while providing insights for future architectural designs.

\section{Related works}

\textbf{Audio Source Separation.} Modern audio source separation has been dominated by deep learning approaches operating in either time-domain or time-frequency domains. U-Net architectures remain influential, from Wave-U-Net's pioneering direct waveform processing \cite{stoller2018wave} to recent variants like ResUNet \cite{kong2021decoupling} achieving 8.98 dB signal-to-distortion ratio (SDR) on vocals, and DTTNet~\cite{chen2024music} reaching 10.12 dB cSDR with 86.7\% fewer parameters. Transformer architectures have emerged for modeling long-range dependencies, with SepFormer \cite{subakan2021attention} setting state-of-art performance on multiple audio separation benchmarks cite{hershey2016deep}, and BS-RoFormer \cite{lu2024music} winning SDX23 \cite{fabbro2023sound} with 9.80 dB average SDR. Band-specific processing strategies, exemplified by Band-Split RNN \cite{yu2022high}, have shown significant gains by explicitly modeling frequency-dependent characteristics. While these one-step models offer computational efficiency, they typically trade separation quality for speed, and show limitations as scaling model parameters become ineffective due to lack of real-world data \cite{keren2018scaling, zhang2024beyond}.

\textbf{Multi-Step Audio Source Separation.} Multi-step approaches consistently demonstrate superior separation quality, albeit at higher computational costs. Diffusion models have established new performance benchmarks, with Multi-Source Diffusion Models \cite{mariani2023multi}, DiffSep \cite{scheibler2023diffusion}, DOSE \cite{tai2023dose} and SGMSE \cite{richter2023speech} achieving state-of-the-art results but requiring 50-200 inference steps. Flow matching techniques like FlowSep \cite{yuan2025flowsep} offer efficiency gains, achieving comparable quality with only 10-20 steps. While multi-step methods typically achieve better performance, they require 10-100 times more computation during inference. Recent work has focused on reducing step counts through techniques like progressive distillation and adversarial conditional diffusion distillation \cite{kaneko2024fastvoicegrad}, yet these approaches still require specialized multi-step training procedures.

\textbf{Training-Free Multi-Step Inference.} While multi-step models require specific training procedures, recent advances in training-free inference offer promising alternatives. In natural language processing, Chain-of-Thought \cite{wei2022chain} and Self-Consistency \cite{wang2022self} approaches have shown that models can benefit from structured multi-step reasoning without specialized training. Audio-specific training-free approaches include zero-shot separation through query-based learning \cite{chen2022zero} and unsupervised separation by repurposing pretrained models \cite{manilow2022improving}. Recent work on test-time scaling \cite{muennighoff2025s1, snell2024scaling} demonstrates that performance can improve with increased inference-time computation, though these concepts have not been systematically explored for audio source separation. Process Reward Models \cite{zhang2025lessons} offer mechanisms for evaluating intermediate steps, but their application to audio remains unexplored. Our work bridges this gap by introducing a training-free multi-step inference method specifically designed for audio source separation, leveraging the inherent properties of models trained with data augmentation to achieve multi-step improvements without retraining.

\section{Method}
\label{sec:method}
We propose a simple refinement mechanism shown in Algorithm~\ref{alg:iter-refine} that improves separation quality through iterative processing. We denote a pretrained separation model as $f(\cdot)$. We denote the noisy mixture as $x_0 \in \mathbb{R}^{L}$, where $ L $ is the number of audio samples. we start from an initial separation $y_0 = f(x_0)$ and perform the separation for $T$ refinement steps. We propose to improve the separation quality of $ y_{0} $ by using a multi-step update strategy as follows.

At each step $t$ ranging from 1 to $T$, we create a new input mixture $x_{t}$ by:
\begin{equation}\label{eq:blend}
   x_t = r_{t} x_0 + (1 - r_{t}) y_{t-1},
\end{equation}
\noindent where $x_{0}$ is the original input mixture, $ y_{t-1} $ is the separation result at the $(t-1)$-th step, and $r_{t} \in [0, 1] $ is a variable to be optimzed. Equation (\ref{eq:blend}) shows that the new input mixture $x_{t}$ is the blending between the input mixture $ x_{0} $ and the separated source $ y_{t-1} $. 

Directly optimizing $r_{t}$ is a challenging problem. To address this problem, We linearly sample evenly spaced $K$ values between 0 and 1 and denote them as $r_{t}^{(k)}$. We denote $ x_{t}^{(k)} $ corresponds to $r_{t}^{(k)}$ calculated by (\ref{eq:blend}). Then, we search for the optimal $r_{t}^{*}$ that optimize a metric:
\begin{equation}\label{eq:reward}
   r_{t}^{*} = \arg\max_{k \in \{1,\ldots,K\}} R(f(x_t^{(k)})).
\end{equation}
\noindent We define $ R $ as a metric to evaluate the separation quality, such as SI-SNR and DNSMOS \cite{reddy2021dnsmos}. A larger metric indicates better separation quality. By this means, we search the optimal $ r^{*} $ to constitute $ x_{t}^{*} $. The optimal separation result is calculated by $ y_{t}=f(x_{t}^{*}) $.

\begin{algorithm}[t!]
\caption{Multi-step Inference for One-step Audio Separation Models}
\label{alg:iter-refine}
\KwIn{Mixture signal $x_0$, separation model $f$, total steps $T$, number of ratios $K$}
\KwOut{Final separated signal $y_T$}
$y_0 \leftarrow f(x_0)$\;
\For{$t\leftarrow 1$ \KwTo $T$}{
    Sample $K$ ratios evenly in $[0,1]$ and denote them by $r^{(k)}$\;
    Constitute $K$ inputs $x_{t}^{(k)} = r_{t}^{(k)} x_0 + (1 - r_{t}^{(k)}) y_{t-1}$\;
    Search the optimal $r_{t}^{*}$ by (\ref{eq:reward}) and constitute $x_{t}^{*}$ by (\ref{eq:blend})\;
    Calculate $y_{t} = f(x_{t}^{*})$\;
}
\Return $y_T$
\end{algorithm}

Then, we repeat updating $r_{t}^{*}$, $x_{t}^{*}$, and $ y_{t} $ for $ T $ steps to calculate the final step $ y_{T} $ as the separation result. 
As proven in Section 3.1, the separation quality $R(y_t)$ is non-decreasing with respect to the initial quality $R(y_0)$ when the optimal $r_t^*$ is chosen. As we prove in Section 3.2, if the estimated ratio $ r_{t} $ deviates from the optimal ratio with a small value, the upper bound deviation of metric $ R(y_{t}) $ is predictable.

\subsection{Non-Decreasing Property of the Separation Metric}

We first show that the multi-step separation increases the lower bound of the one-step separation as follows. 

\begin{theorem}[Lower Bound of Metrics]
\label{thm:error_bound}
There always exists an optimal $ x_{t}^{*} $ such that $ R(y_{t}) \geq R(y_{0}) $.
\end{theorem}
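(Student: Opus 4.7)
The plan is to exploit a boundary case in the candidate set rather than any analytic property of $f$ or $R$. Because the ratios $\{r_t^{(k)}\}$ are evenly spaced in $[0,1]$, one of them (the rightmost) is $r_t = 1$, and substituting this into (\ref{eq:blend}) collapses the blending to $x_t^{(k)} = x_0$. The model then outputs $f(x_0) = y_0$, so this particular candidate attains metric value exactly $R(y_0)$. Since (\ref{eq:reward}) selects $r_t^*$ by taking the argmax of $R$ over the full candidate set, the chosen iterate satisfies $R(y_t) \geq R(y_0)$ by construction.

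Concretely, I would structure the proof in three short steps. First, I would fix notation by letting $k^\dagger$ denote the index with $r_t^{(k^\dagger)} = 1$ and verify via (\ref{eq:blend}) that $x_t^{(k^\dagger)} = x_0$. Second, I would apply $f$ to obtain $f(x_t^{(k^\dagger)}) = y_0$ and hence $R(f(x_t^{(k^\dagger)})) = R(y_0)$. Third, I would invoke the maximization in (\ref{eq:reward}): since $r_t^*$ is an argmax over a finite set that contains $k^\dagger$, we have $R(y_t) = R(f(x_t^*)) \geq R(f(x_t^{(k^\dagger)})) = R(y_0)$. The argument applies identically at every $t \geq 1$, so no induction is required; the lower bound is taken against the one-step baseline $y_0$, not against the previous iterate $y_{t-1}$.

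The main thing to flag, rather than a real obstacle, is the implicit convention on the sampling grid. The argument needs the endpoint $r = 1$ to lie in $\{r_t^{(k)}\}$; if instead the grid were taken as, say, $r^{(k)} = k/(K+1)$, the conclusion would have to be obtained by an approximation argument using continuity of $f$ and $R$ together with density of the grid as $K \to \infty$, which would weaken the statement from an exact inequality to an $\varepsilon$-inequality. I would therefore open the proof by stipulating that the evenly spaced grid includes $r = 1$ (which is consistent with the algorithm's description), after which the remainder of the argument is essentially a one-line observation and requires no smoothness assumption on $f$ nor any structural assumption on $R$ beyond that it is the same metric being maximized.
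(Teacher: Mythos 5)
Your proposal is correct and is essentially the same argument as the paper's: both reduce to observing that the candidate $r=1$ recovers $x_0$, hence $f(x_0)=y_0$, and the argmax in (\ref{eq:reward}) can only do at least as well as that candidate. Your added remark that the evenly spaced grid must actually contain the endpoint $r=1$ is a reasonable point of care that the paper leaves implicit.
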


\begin{proof}
for all $t \in \{1,\ldots,T\}$,
\begin{align}
R\left(y_t\right) & =R\left[f\left(x_t^*\right)\right] \\
& = R\left[ f(r_{t}^{*}x_{0} + (1-r_{t}^{*})y_{t-1}) \right] \\
& \geq R\left[f\left(1 \cdot x_0 + 0\cdot y_{t-1}\right)\right] =R\left[f(x_0)\right]=R\left[y_0\right].
\end{align}
\end{proof}

Proof 1 shows that $ R(y_{t}) \geq R(y_{0}) $, indicating the multi-step separation increases the lower bound of one-step separation.

\subsection{Error Bound Analysis}

We now analyze the error when the estimated ratio $ r_{t} $ deviates from the optimal ratio. We will show $ \textit{Var}(R(y_{t})) $ is lower than a bound. We first introduce Lipschitz bound on derivative. 
\begin{lemma}[Lipschitz Bound on Derivative]
\label{lem:lipschitz_derivative}
If $f$ satisfies the Lipschitz condition $|f(x_1) - f(x_2)| \leq L|x_1 - x_2|$, then:
\begin{equation}
   |f'(x)| \leq L \quad \text{for all } x \text{ where } f \text{ is differentiable}.
\end{equation}
\end{lemma}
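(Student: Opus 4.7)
The plan is to unwind the definition of the derivative at any point $x$ where $f$ is differentiable and apply the Lipschitz hypothesis directly to the difference quotient. Since the result is a standard fact in elementary real analysis, the proof will be short; the work amounts to verifying that the pointwise Lipschitz bound on values passes to the limit that defines $f'(x)$.

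Concretely, I would proceed in three short steps. First, write $f'(x) = \lim_{h \to 0} \frac{f(x+h) - f(x)}{h}$, which is well-defined at any point of differentiability. Second, apply the Lipschitz assumption with $x_1 = x+h$ and $x_2 = x$ to obtain $|f(x+h) - f(x)| \leq L|h|$, and divide by $|h|$ (for $h \neq 0$) to conclude that the absolute value of every difference quotient is at most $L$. Third, take the limit $h \to 0$ and use continuity of the absolute value function to transfer the bound to $|f'(x)|$, giving $|f'(x)| \leq L$.

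The only subtlety worth being explicit about is that one should bound $\left| \tfrac{f(x+h)-f(x)}{h} \right|$ rather than the signed quotient, so that the inequality does not flip depending on the sign of $h$; and that the limit of a bounded quantity is itself bounded by the same constant. Neither is a genuine obstacle, so I do not anticipate any hard step. The main role of this lemma in the paper is as a preparatory tool: combined with the Lipschitz constant $L$ of $f$, it will let us convert a small deviation in the estimated ratio $r_t$ from the optimum into a controllable deviation in $f(x_t)$, which is what the subsequent error bound on $\mathrm{Var}(R(y_t))$ will require.
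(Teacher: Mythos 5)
Your proof is correct: bounding the absolute value of the difference quotient by $L$ via the Lipschitz hypothesis and passing to the limit is the standard and complete argument, and your care about taking absolute values before the limit is exactly the right (minor) point to flag. The paper itself states Lemma~\ref{lem:lipschitz_derivative} without any proof, treating it as a known fact from elementary analysis, so there is no alternative route to compare against; your writeup simply supplies the omitted standard argument.
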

Then, we assume the separation model $ f(\cdot) $ and the evaluation metric $ R(\cdot) $ satisfy the following assumption.
\begin{assumption}
\label{ass:lipschitz}
The separation model $f(\cdot)$ is locally differentiable and Lipschitz continuous with constant $L_f$. The metric $R(\cdot)$ is locally differentiable and Lipschitz continuous with constant $L_r$. As in practice the separation model are often neural networks and common metrics are either continous ratios (SNR, SI-SDR, etc.) or neural networks (UTMOS, DNSMOS, etc.), we assume both are likely true in reality.
\end{assumption}

At each step $t$, following our algorithm notation, we have:
\begin{equation}
x_t = r_t x_0 + (1 - r_t) y_{t-1}.
\end{equation}

Due to the imperfect metric, the selected ratio $r_t$ may differ from the optimal ratio $r_t^*$. We model this uncertainty by assuming a gaussian distribution centered around $r_t^*$:
\begin{equation}
   r_t \sim \mathcal{N}(r_t^*, \varepsilon_r^2).
\end{equation}

Using the properties of normal distributions, we obtain:
\begin{equation}\label{eq:gaussian_x}
   x_t \sim \mathcal{N}(r_t^* x_0 + (1-r_t^*) y_{t-1}, (x_0 - y_{t-1})^2 \varepsilon_r^2),
\end{equation}
\noindent where $ \textit{Var}(x_{t}) = (x_0 - y_{t-1})^2 \varepsilon_r^2 $ has a dimension of number of audio samples.

\begin{theorem}[Error Bound]
\label{thm:error_bound}
Under Assumption~\ref{ass:lipschitz}, the variance of the metric function at step $t$ is bounded by:
\begin{equation}
   \text{Var}[R(y_t)] \leq L_f^2 L_r^2 (x_0 - y_{t-1})^2 \varepsilon_r^2
\end{equation}
\end{theorem}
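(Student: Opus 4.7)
The plan is to reduce the claim to a Poincaré-type variance bound for Lipschitz functions of a scalar random variable, and then compose the Lipschitz constants of $R$ and $f$ via the chain rule. The key conceptual observation is that, conditional on the history up to step $t-1$, the mixture $x_t = r_t x_0 + (1-r_t) y_{t-1}$ is a deterministic function of the single scalar random variable $r_t$. So although $x_t$ lives in $\mathbb{R}^L$, only one scalar source of randomness drives $R(y_t) = R(f(x_t))$.

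Concretely, I would first define the scalar-to-scalar map $g(r) := R\bigl(f(r x_0 + (1-r) y_{t-1})\bigr)$ so that $R(y_t) = g(r_t)$, and then bound its Lipschitz constant. For any $r_1, r_2$, let $u_i := r_i x_0 + (1-r_i) y_{t-1}$, so $u_1 - u_2 = (r_1 - r_2)(x_0 - y_{t-1})$. Applying Assumption~\ref{ass:lipschitz} twice,
\[
|g(r_1) - g(r_2)| \le L_r\,\|f(u_1) - f(u_2)\| \le L_r L_f\,\|u_1 - u_2\| = L_f L_r\,\|x_0 - y_{t-1}\|\cdot|r_1 - r_2|,
\]
so $g$ is Lipschitz with constant $M := L_f L_r\,\|x_0 - y_{t-1}\|$. (Lemma~\ref{lem:lipschitz_derivative} gives the equivalent derivative-based version if one prefers a chain-rule argument: $|g'(r)| \le L_r L_f \|x_0 - y_{t-1}\|$ wherever the derivatives exist.)

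Second, I would invoke the standard fact that for any $M$-Lipschitz scalar function $g$ and any real random variable $Z$ with finite variance, $\mathrm{Var}[g(Z)] \le M^2\,\mathrm{Var}[Z]$. This follows in one line from the iid-coupling identity $\mathrm{Var}[g(Z)] = \tfrac{1}{2}\mathbb{E}[(g(Z_1) - g(Z_2))^2]$ combined with the Lipschitz bound. Setting $Z = r_t$ with $\mathrm{Var}(r_t) = \varepsilon_r^2$ and substituting $M$ immediately yields $\mathrm{Var}[R(y_t)] \le L_f^2 L_r^2\,\|x_0 - y_{t-1}\|^2\,\varepsilon_r^2$, which matches the stated bound once $(x_0 - y_{t-1})^2$ is read as the squared Euclidean norm.

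The main obstacle is purely notational rather than mathematical: the paper writes both $(x_0 - y_{t-1})^2$ and $\mathrm{Var}(x_t)$ as vector-valued quantities of dimension $L$, while the target $\mathrm{Var}[R(y_t)]$ is a scalar, so one must commit to a convention (componentwise squaring vs.\ squared $\ell_2$ norm) before the inequality becomes literally well-typed. The $\ell_2$ reading is the one that makes the composed-Lipschitz argument go through cleanly. Notably, Gaussianity of $r_t$ is never used beyond extracting $\mathrm{Var}(r_t) = \varepsilon_r^2$; the same bound would hold for any distribution of $r_t$ with that variance, and in fact one can sharpen slightly under Gaussianity via a Gaussian Poincaré inequality if desired.
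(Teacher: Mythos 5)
Your proof is correct and rests on the same two ingredients as the paper's: the variance-contraction property of Lipschitz maps ($\mathrm{Var}[g(Z)] \le M^2 \mathrm{Var}[Z]$) and the multiplicativity of Lipschitz constants under composition. The difference is in how you organize them. The paper pushes the variance bound through one layer at a time — $\mathrm{Var}[R(y_t)] \le L_r^2\,\mathrm{mean}(\mathrm{Var}(y_t)) \le L_r^2 L_f^2\,\mathrm{Var}(x_t)$ — which forces it to work with ill-typed vector-valued ``variances'' of $x_t$ and $y_t$ and to patch the mismatch with an ad hoc $\mathrm{mean}(\cdot)$. You instead collapse the whole pipeline into a single scalar-to-scalar map $g(r) = R(f(r x_0 + (1-r) y_{t-1}))$, bound its Lipschitz constant by $L_f L_r \|x_0 - y_{t-1}\|$, and apply the variance-contraction fact exactly once to the scalar random variable $r_t$. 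This is the cleaner execution: it makes every quantity well-typed, isolates the one genuine source of randomness, and makes explicit (as the paper does not) that Gaussianity of $r_t$ is irrelevant beyond $\mathrm{Var}(r_t) = \varepsilon_r^2$. Your remark that $(x_0 - y_{t-1})^2$ must be read as $\|x_0 - y_{t-1}\|^2$ for the statement to be well-posed is accurate and identifies the main looseness in the paper's own presentation.
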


\begin{proof}
Since $y_t = f(x_t)$, applying Lemma~\ref{lem:lipschitz_derivative} to both $f$ and $R$:
\begin{align}
   \text{Var}[R(y_t)] &\leq L_r^2 mean(\text{Var}(y_t))  \ \ \ \ \textcolor{gray}{\text{\# Lipschitz on $ R(\cdot) $}} \\
   &\leq L_r^2 \cdot L_f^2 \text{Var}(x_t) \ \ \ \ \textcolor{gray}{\text{\# Lipschitz on $ f(\cdot) $}} \\
   &= L_f^2 L_r^2 (x_0 - y_{t-1})^2 \varepsilon_r^2 \ \ \ \ \textcolor{gray}{\text{\# \ Eq. (\ref{eq:gaussian_x})}}
\end{align}
\end{proof}

Thus we have:
\begin{itemize}
    \item The error is proportional to the distance $(x_0 - y_{t-1})^2$ between the original mixture and current estimate, which typically decreases as iterations progress
    \item The bound depends quadratically on the Lipschitz constants $L_f$ and $L_r$ - i.e. the smoothness of both the separation network and metric.
    \item The bound also depends quadratically on the uncertainty of metric $\varepsilon_r^2$.
\end{itemize}

For well-trained models with small Lipschitz constants—achieved through techniques such as spectral normalization \cite{miyato2018spectralnormalizationgenerativeadversarial}, gradient regularization \cite{finlay2019lipschitzregularizeddeepneural}, or Lipschitz regularization \cite{gouk2020regularisationneuralnetworksenforcing}—the error bounds remain tight even with noisy metrics. 

While our error bound shows that the algorithm's behavior is controlled even with imperfect metrics, we can further analyze its convergence properties. Consider the expected improvement at each step. Even when the metric occasionally selects suboptimal candidates, as long as $\mathbb{E}[r_t]$ is sufficiently close to $r_t^*$ and the metric has positive correlation with true quality, the algorithm exhibits statistical convergence. Specifically, the expected change in true quality satisfies $\mathbb{E}[\Delta Q_t] > 0$ when averaged over the metric's noise distribution, where $Q_t$ represents the true (unobserved) quality at step $t$. Moreover, as iterations progress and $\|x_0 - y_{t-1}\|$ decreases, our error bound naturally tightens: $\text{Var}[R(y_t)] \propto (x_0 - y_{t-1})^2$. This self-stabilizing property means the algorithm becomes increasingly robust to metric noise as it approaches a solution.

\section{Results}
\label{sec:results}
We evaluate our iterative inference method across two audio separation tasks: speech enhancement \cite{benesty2006speech} (removing environmental noise) and music source separation \cite{cano2018musical} (isolating instruments or vocals). To simulate practical conditions, we use faster metric estimations or imperfect calculation methods during search when available, while reporting all standard metrics for evaluation. All experiments use $K = 10$ mixing ratio candidates per step and run for $T = 20$ total steps. We report metrics at steps 0 (baseline one-step inference), 1, 5, 10, and 20. We show results to four decimal places to illustrate progressive improvements, reverting to two decimal places when comparing with prior work following standard reporting practices in the field. We show more analysis, including evaluation results at all steps in Appendix. All experiments are conducted on a single NVIDIA RTX 4090.

\subsection{Speech Enhancement}
\label{ssec:se}
Following standard speech enhancement evaluation protocols \cite{zhang2024urgent, reddy2021icassp}, we evaluate our method's performance under both non-blind (the clean target is known) and blind (the clean target is not known) scenarios.

\textbf{Non-blind Evaluation.} In non-blind scenarios, we evaluate on synthetically constructed noisy mixtures where clean reference signals are available. This enables the use of intrusive metrics that directly compare enhanced outputs against ground-truth clean audio, providing precise quantitative assessment. We utilize the test set of VCTK-DEMAND dataset, which combines clean speech from VCTK \cite{veaux2013voice} with diverse noise samples from DEMAND \cite{thiemann2013diverse} to form 824 test samples. While synthetic datasets enable controlled evaluation conditions and perfect reference signals, they may not fully capture the complexity of real-world acoustic environments.

\textbf{Blind Evaluation.} To address the domain gap between synthetic and real-world conditions, we additionally evaluate on naturally noisy recordings where clean references are unavailable. This necessitates non-intrusive metrics that assess perceptual quality without reference signals. We employ the DNS Challenge v3 blind test set [3], which contains 600 real-world recordings with authentic noise characteristics. This dual evaluation strategy ensures both rigorous quantitative assessment and practical performance validation.

\textbf{Search and Evaluation Metrics.} We evaluate our approach using both intrusive and blind metrics. For intrusive evaluation, we report Perceptual Evaluation of Speech Quality (PESQ), Short-Time Objective Intelligibility (STOI), and Scale-Invariant Signal-to-Noise Ratio (SI-SNR). During inference, we conduct search using a faster PESQ estimator~\footnote{\url{https://github.com/audiolabs/torch-pesq}} instead of exact PESQ calculations \cite{kim2019end} to simulate realistic conditions with imperfect search. For blind evaluation, we employ DNSMOS-P.808, which provides BAK (background), SIG (signal), OVL (overall), and MOS (mean opinion score) predictions, alongside UTMOS \cite{saeki2022utmos}. We use UTMOS for search during inference.

All experiments utilize the pretrained BSRNN~\footnote{Available at \url{https://github.com/Emrys365/se-scaling}} model \cite{luo2023music} from \cite{zhang2024beyond}. These models were trained as one-step speech enhancement systems with varying sizes and computational complexities on a combined dataset from VCTK-DEMAND, DNS Challenge V3 training set, and WHAMR!~\cite{maciejewski2020whamr}. We use the ``medium'' variant as our base model for multi-step inference, and compare against the ``large'' and ``xlarge'' variants to evaluate our method against simply scaling model size. For context, the ``large'' variant contains 3.73 times more parameters and requires 3.96 times more multiply–accumulate (MAC) operations compared to the ``medium'' variant, while the ``xlarge'' variant has 6.16 times more parameters and requires 7.85 times more MAC operations. We also include a state-of-art diffusion-based enhancement model SGMSE+~\cite{richter2023speech} with 30 steps during inference, and use metrics reported in its paper for comparison.

Table~\ref{tab:se_results} presents our evaluation results. On intrusive metrics, our method achieves substantial PESQ improvement from 3.21 to 3.28 with just the first step of our proposed inference method, despite using an approximated PESQ metric during search. Further improvements to 3.29 are observed at inference step 20. For STOI, we observe improvements but not monotonic increases, with optimal performance at both steps 5 and 20. In contrast, SI-SNR shows a consistent decline with additional steps. We attribute this divergent behavior to the limited correlation between SI-SNR and perceptual metrics like PESQ/STOI, as the latter focus on subjective quality while SI-SNR measures objective signal fidelity. 

To contextualize our results, we compare our approach against simply using larger models. Interestingly, our ``medium'' model at inference step 0 outperforms the ``large'' variant at step 0, aligning with observations in \cite{zhang2024beyond} that highlight limitations of model scaling for speech enhancement tasks. While the ``xlarge'' variant performs worse on VCTK-DEMAND, it achieves better metrics on DNS Challenge V3 and superior UTMOS scores compared to our 20-step approach. Nevertheless, our method demonstrates better performance on DNSMOS signal (SIG), overall (OVRL), and MOS predictions, with just one inference step employed with our method sufficient to improve SIG and OVRL scores despite optimizing solely for UTMOS during search.

When comparing against a diffusion-based model trained on multi-step objective~\cite{richter2023speech}, we observe that on VCTK-DEMAND, its intrusive metrics fall behind even our medium model with one-step inference, consistent with that paper's own findings. On DNS Challenge V3, while the diffusion model shows better performance in OVRL and SIG, our approach achieves comparable MOS performance with just the first step of our proposed method\footnote{As previously noted, this comparison is made with only 2 decimal points following previous works.}. For the BAK metric, our medium variant already outperforms the diffusion-based model. This highlights limitations for models trained with multi-step objectives, and further validates the effectiveness of our proposed method.

\begin{table}[h]
  \footnotesize
  \centering
  \begin{subtable}[t]{0.4\textwidth}
    \centering
    \begin{tabular}{cccc}
    \toprule
        \# Step & PESQ            & STOI            & SI-SNR           \\
        \midrule
        0       & 3.2042          & 0.9577          & \textbf{19.2717} \\
        1       & 3.2764          & 0.9588          & 18.7694          \\
        5       & 3.2832          & \textbf{0.9590} & 18.7159          \\
        10      & 3.2864          & 0.9589          & 18.6976          \\
        20      & \textbf{3.2867} & \textbf{0.9590} & 18.6880          \\
        \midrule
        L   & 3.0951          & 0.9575          & 18.7940          \\
        XL & 3.0863          & 0.9574          & 18.5120          \\
        \cite{richter2023speech} & 2.93          & -          & 17.3          \\
        \bottomrule
        \end{tabular}
    \caption{Intrusive metrics on VCTK-DEMAND}
  \end{subtable}
  \hfill                               %
  \begin{subtable}[t]{0.58\textwidth}
    \centering
    \begin{tabular}{cccccc}
    \toprule
    \# Step & UTMOS           & BAK             & SIG             & OVRL            & MOS             \\
    \midrule
    0       & 2.3101          & 3.9444          & 3.2476          & 2.9591          & 3.6158          \\
    1       & 2.3355          & 3.9590          & 3.2776          & 2.9894          & 3.6353          \\
    5       & 2.3379          & 3.9580          & 3.2820          & 2.9924          & \textbf{3.6378} \\
    10      & 2.3380          & 3.9581          & 3.2821          & 2.9925          & \textbf{3.6378} \\
    20      & 2.3380          & 3.9586          & \textbf{3.2828} & \textbf{2.9931} & \textbf{3.6378} \\
    \midrule
    L   & 2.2974          & 3.8986          & 3.2349          & 2.9277          & 3.5999          \\
    XL & \textbf{2.3487} & \textbf{3.9616} & 3.2757          & 2.9893          & 3.6359     \\
    \cite{richter2023speech} & - & 3.82 & 3.42          & 3.04          & 3.64     \\
    \bottomrule
    \end{tabular}

    \caption{Non-intrusive metrics on DNS Challenge V3}
  \end{subtable}

  \caption{Performance comparison across inference‑time steps for speech enhancement. Best metrics are denoted in \textbf{bold}; ``L'' and ``XL'' refers to the one-step inference result for the ``large'' and ``xlarge'' variants of the model. Unavailable metrics due to lack of report in~\cite{richter2023speech} is denoted with a dash (-).}
  \label{tab:se_results}
\end{table}
    
\subsection{Music Source Separation}
\label{ssec:mss}
\begin{table}[h]
\footnotesize
\centering
\begin{tabular}{ccccccccc}
\toprule
\multirow{2}{*}{\# Step} & \multicolumn{4}{c}{uSDR (dB)}                                          & \multicolumn{4}{c}{cSDR (dB)}                                         \\
                         & Vocals           & Bass            & Drums           & Other           & Vocals          & Bass            & Drums           & Other           \\
                         \midrule
0                        & 10.2514          & 7.0904          & 7.6070          & 6.1259          & 8.7523          & 5.0513          & 7.0081          & 7.1150          \\
1                        & 10.4129          & 7.3754          & 7.9163          & 6.4366          & 9.1254          & 5.5758          & 7.3251          & 7.4163          \\
5                        & 10.4428          & 7.4699          & 8.0079          & \textbf{6.4615} & 9.0821          & 5.7376          & \textbf{7.3574} & \textbf{7.4279} \\
10                       & 10.4493          & 7.5083          & 8.0160          & 6.4500          & 9.0962          & 5.7180          & 7.3399          & 7.3194          \\
20                       & \textbf{10.4542} & \textbf{7.5353} & \textbf{8.0401} & 6.4502          & \textbf{9.1123} & \textbf{5.7532} & 7.3045          & 7.3123   \\
\bottomrule \\
\end{tabular}
  \caption{Performance comparison across inference‑time steps for music source separation. Best metrics are denoted in \textbf{bold}.}
  \label{tab:mss_results}
\end{table}

For music source separation, we adopt the MUSDB18-HQ \cite{musdb18-hq} test set, a standard benchmark comprising 50 songs spanning various musical genres \cite{fabbro2023sound} and provides mixture and four evaluation stems: vocals, bass, drums and other. 

\textbf{Search and Evaluation Metrics.} For music source separation evaluation, we employ two standard signal-to-distortion ratio (SDR) metrics on stereo signals: Utterance-level SDR (uSDR), which computes the mean SDR across all test songs; and chunk-level SDR (cSDR), which calculates the median of per-song values, where each song's value is the median SDR of all 1-second segments. During inference, we use a modified SDR calculation that differs from the evaluation metrics in two key aspects: (1) we use 6-second chunks with 50\% overlap instead of 1-second segments, and (2) we compute SDR independently for each channel rather than jointly on stereo signals. This search metric provides similar trends to the final evaluation metrics yet is not exactly the same. All final results are reported using the standard uSDR and cSDR metrics on stereo signals.

We employ the released checkpoints from \cite{chen2024music} and present our evaluation results in Table~\ref{tab:mss_results}. We observe slight metric discrepancies compared to those reported in \cite{chen2024music}; this phenomenon is common in music source separation due to variations in evaluation packages \cite{ward2018bss}. Since our analysis focuses on relative differences across time steps within our own evaluation framework, we consider our results internally consistent and reliable. Our results exhibit a similar trend to speech enhancement: the first step of our proposed inference method yields the most substantial improvement, with vocals uSDR showing a gain of 0.16 dB. To contextualize this improvement, the original paper \cite{chen2024music} reported a 0.11 dB gain when modifying network parameters to achieve 10× inference time, and a 0.07 dB gain from additional data annotation for fine-tuning. All inference steps conducted using our proposed method show consistent improvement over one-step inference.

We find that uSDR and cSDR do not perfectly correlate, and performance does not increase monotonically across steps for all stems. While the ``Vocals'' and ``Bass'' stem shows consistent improvement, ``Drums'' and ``Other'' achieve optimal performance at intermediate step 5. Notably, our method yields improvements for ``Vocals,'' ``Bass,'' and ``Drums'' stems that are comparable to or exceed those achieved by the iterative fine-tuning method in the BSRNN paper on an additional 1,750 songs, 17.5 times larger than the training dataset \cite{luo2023music}. Importantly, all multi-step inference results using our proposed method outperform one-step inference.

\section{Connection to Bridge Models}
\label{sec:bridge_connection}

Why is our simple method so effective in practice? In this section, we connect our proposed method to properties of denoising diffusion bridge models (DDBMs) \cite{zhou2023denoising}. We show that audio source separation models, through their often-employed augmentation-based training procedures, naturally exhibit DDBM-like properties that our inference method exploits.

Audio source separation faces a fundamental challenge: real-world mixtures rarely come with their clean signal counterparts. To address this, the field has universally adopted \emph{data mixing} during training. As examples, the DNS Challenge \cite{reddy2020interspeech, reddy2021icassp} dynamically mixes noise and speech at various SNRs \cite{reddy2019scalable}, and music separation model practitioners mixes sources from different songs \cite{jeon2024does, defossez2019music}.

The data mixing process is straightforward: sample a clean signal $p\sim\mathcal{P}$, a noise signal $q\sim\mathcal{Q}$, and a mixing coefficient $\sigma \in [0, 1]$~\footnote{The mixing coefficient is often in the form of SNR, which can be expressed as a gain level between 0 and 1.}. The mixture is then:

\begin{equation}
y = \sigma p + (1 - \sigma) q
\end{equation}

And the audio source separation model learns to recover $p$ by minimizing:
\begin{equation}
\label{eq:lsep}
\mathcal{L}_{\text{sep}}(\cdot) = \mathbb{E}_{p,q,\sigma} \left[ |f(y) - p|^2 \right]
\end{equation}

We now show that this standard mixing process enables a special case of bridge diffusion. 

\begin{proof} Audio mixtures are points along a linear bridge. Consider a diffusion bridge that connects noise samples from $\mathcal{Q}$ to clean samples from $\mathcal{P}$. For a linear bridge, the interpolation at time $t \in [0, T]$ is:

\begin{equation}
x_t = \frac{T-t}{T} x_0 + \frac{t}{T} x_T
\end{equation}

where $x_0 \sim \mathcal{Q}$ and $x_T \sim \mathcal{P}$. Consider $T = 1$, $t = \sigma$, $x_0 = q$, and $x_1 = p$, we get our mixture:
\begin{equation}
x_\sigma = (1-\sigma) q + \sigma p = y
\end{equation}
\end{proof}

In the DDBM framework, models learn by score matching:
\begin{equation}
\mathcal{L}_{\text{DDBM}}(\cdot) = \mathbb{E}{t, x_0, x_T} \left[ w(t) |s_\omega(x_t, x_T, t) - \nabla_{x_t} \log q(x_t | x_0, x_T)|^2 \right]
\end{equation}

where $w(t)$ is an arbitary time-dependent weighting function that controls the importance of different time steps in the loss. For our linear bridge, the conditional distribution $q(x_t | x_0, x_T)$ is a Dirac delta at the interpolation point. As the Dirac delta doesn't have a well-defined gradient, we consider a smoothed version, a Gaussian centered at the interpolation point with small variance $\epsilon$:
\begin{equation}
q_\epsilon(x_t | x_0, x_T) = \mathcal{N}(x_t; (1-t)x_0 + t x_T, \epsilon^2 I)
\end{equation}
we thereby derived a score:
\begin{equation}
\nabla_{x_t} \log q_\epsilon(x_t | x_0, x_T) = -\frac{1}{\epsilon^2}(x_t - ((1-t)x_0 + t x_T))
\end{equation}

In audio separation, it's natural to parameterize the score using the clean signal prediction and noise level. Consider:
\begin{equation}
s_\omega(y, p, \sigma) = \frac{1}{\epsilon^2\sigma}(y - \hat{p}(y))
\end{equation}
where $\hat{p}(y)$ is the model's estimate of the clean signal. The justification for this formulation is in Appendix. We now substitute this into the DDBM objective:
\begin{align}
\mathcal{L}_{\text{DDBM}}(\cdot) &= \mathbb{E}_{\sigma, p, q} \left[ w(\sigma) \left| \frac{1}{\epsilon^2\sigma}(y - \hat{p}(y)) - \nabla_y \log q\epsilon(y | q, p) \right|^2 \right] \\
&= \mathbb{E}_{\sigma, p, q} \left[ w(\sigma) \left| \frac{1}{\epsilon^2\sigma}(y - \hat{p}(y)) + \frac{1}{\epsilon^2}(y - ((1-\sigma)q + \sigma p)) \right|^2 \right]
\end{align}
As $y = (1-\sigma)q + \sigma p$ by definition, the second term simplifies to zero. Thus:
\begin{align}
\mathcal{L}_{\text{DDBM}}(\cdot) &= \mathbb{E}_{\sigma, p, q} \left[ w(\sigma) \left| \frac{1}{\epsilon^2\sigma}(y - \hat{p}(y))\right|^2 \right] \\
&= \mathbb{E}_{\sigma, p, q} \left[ \frac{w(\sigma)}{\epsilon^4\sigma^2} |\hat{p}(y) - y|^2 \right]
\end{align}
Consider the case $w(\sigma) = \sigma^2$:
\begin{align}
\mathcal{L}_{\text{DDBM}}(\cdot) &= \mathbb{E}_{\sigma, p, q} \left[ \frac{1}{\epsilon^4} |y - \hat{p}(y)|^2 \right]
\end{align}

Recall Equation~\ref{eq:lsep}, and as $\frac{1}{\epsilon^4}$ is a constant, we establish a equivalence between the DDBM training objective and the audio source separation network, i.e. $\mathcal{L}_{\text{DDBM}}(\cdot) \propto \mathcal{L}_{\text{sep}}(\cdot)$

This connection to bridge models explains our inference method's effectiveness. Audio separation models trained with data mixing implicitly learn denoising capabilities along the interpolation path between noise and clean distributions. The equivalence $\mathcal{L}_{\text{DDBM}}(\cdot) \propto \mathcal{L}_{\text{sep}}(\cdot)$ shows that separation models $f(\cdot)$ can denoise inputs at arbitrary points along this linear bridge.

Our candidates $x_t^{(k)} = r_k \cdot x_0 + (1 - r_k) \cdot y_{t-1}$ sample different points along this bridge, each corresponding to a different noise level. The model handles these intermediate noise levels naturally because they lie within the interpolation space encountered during training. 

Our method therefore leverages an inherent property of audio separation models: training with data mixing creates models capable of iterative refinement. By generating and selecting candidates along the learned denoising manifold, our inference algorithm utilizes this capability without requiring additional training. This insight suggests that models trained with interpolative augmentation may possess latent potential for training-free performance improvements through appropriate inference strategies. We present additional theoretical analysis in Appendix.

\section{Discussions, Limitations and Future Work}
\label{sec:discussions}
Our method demonstrates effectiveness in practice, while several limitations warrant discussion. First, our approach requires a metric during inference. For non-intrusive metrics (Section~\ref{ssec:se}), neural networks can be readily employed. However, in other scenarios, we utilized intrusive metrics that require ground-truth audio, which is often unavailable in real-world applications. Exploring neural estimators of these metrics~\cite{kumar2023torchaudio} remains promising future work. Secondly, We observed that all metrics improved upon one-step inference except SI-SNR on VCTK-DEMAND. This anomaly likely stems from the synthetic nature of this dataset, which has been criticized for its high SNR range~\cite{salman2024towards} and limited noise types~\cite{maciejewski2020whamr}. Despite these limitations, we included VCTK-DEMAND due to its prevalence in speech enhancement literature~\cite{das2021fundamentals}. Future work should evaluate our method on more realistic benchmarks across diverse audio separation scenarios, such as the improved dataset proposed by the original authors~\cite{li2021dds}. Finally, while our method employs multiple inference steps that could increase computational cost, many diffusion-based approaches already utilize multi-step inference. Nevertheless, developing optimized sampling schedules and accelerated inference methods represents an important direction for future research.

We acknowledge that enhanced source separation capabilities could allow easier sourcing of training data for generative models, potentially enabling audio deepfake creation~\cite{singfake}. We urge the community to exercise responsible judgment when using audio source separation techniques.

\section{Conclusions}

We presented a training-free inference-time scaling method that transforms pretrained one-step audio source separation models into multi-step systems. Our approach achieves performance gains by iteratively refining outputs through strategic remixing and quality-based selection—requiring only minor computational overhead compared to specialized multi-step models. Across speech enhancement and music separation tasks, we demonstrated consistent improvements over one-step inference in all but one case. Our theoretical analysis reveals two key insights. First, we establish error bounds showing that performance depends quadratically on network smoothness (Lipschitz constants) and metric reliability, with a self-stabilizing property that reduces sensitivity to metric noise as iterations progress. Second, we connect our method to denoising diffusion bridge models, demonstrating that standard data augmentation procedures inadvertently create models capable of iterative refinement along learned denoising manifolds. This bridge model perspective explains why our simple approach succeeds: separation models trained with interpolative mixing naturally possess multi-step capabilities that can be exploited without retraining.

\small

\bibliographystyle{plain}
\bibliography{main}

\clearpage

\appendix
\section{Additional Details for Speech Enhancement}
\subsection{Results on VCTK-DEMAND}
We report the per-step PESQ, STOI and SI-SNR results for VCTK-DEMAND in Figure~\ref{fig:vctk-pesq}, Figure~\ref{fig:vctk-stoi} and Figure~\ref{fig:vctk-sisnr}. We notice the results become stable after two inference steps; the first inference step yield the largest improvement.

\begin{figure}[h]
\centering
\includegraphics[width=\linewidth]{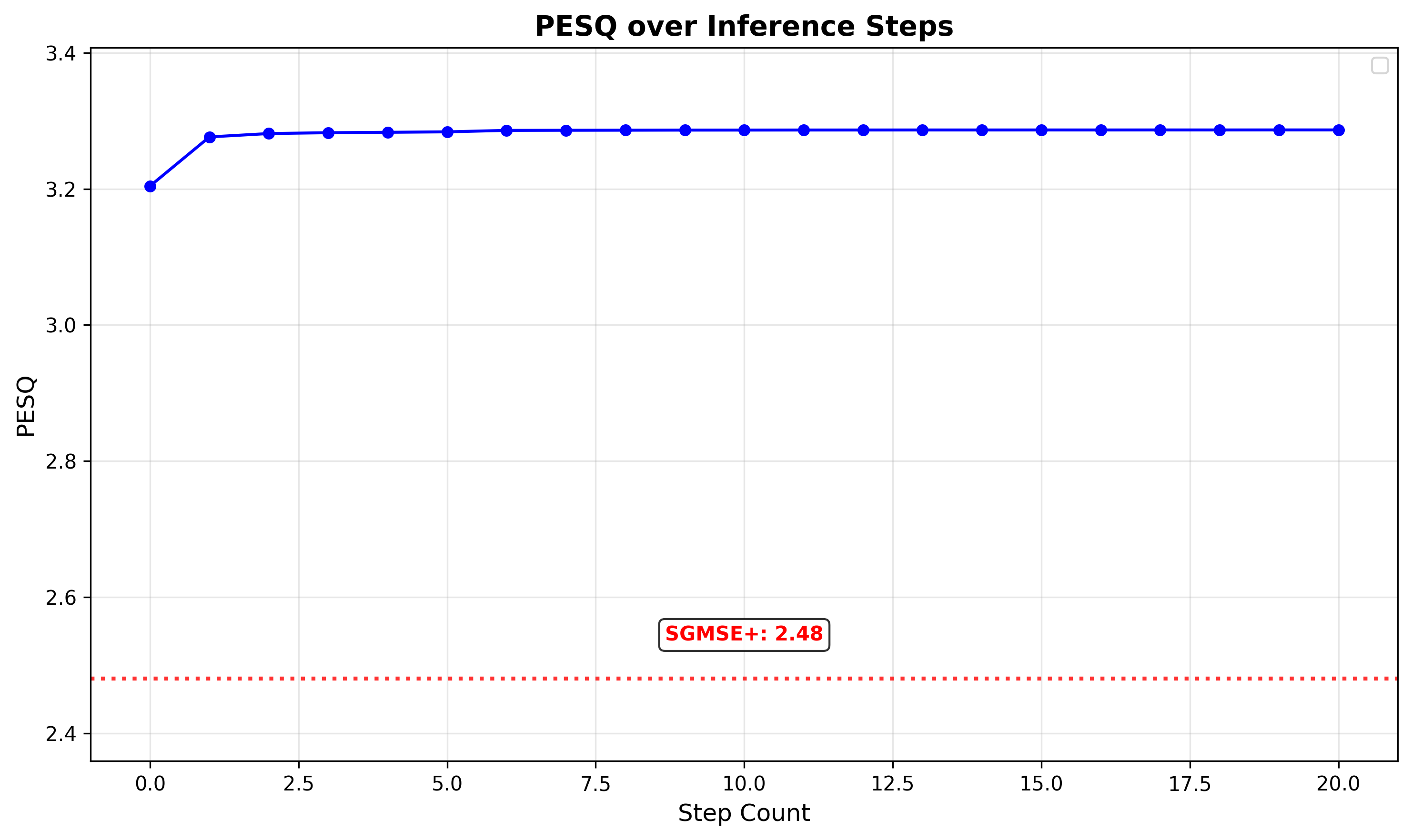}
\caption{PESQ across all inference steps for VCTK-DEMAND test set.}
\label{fig:vctk-pesq}
\end{figure}

\begin{figure}[h]
\centering
\includegraphics[width=\linewidth]{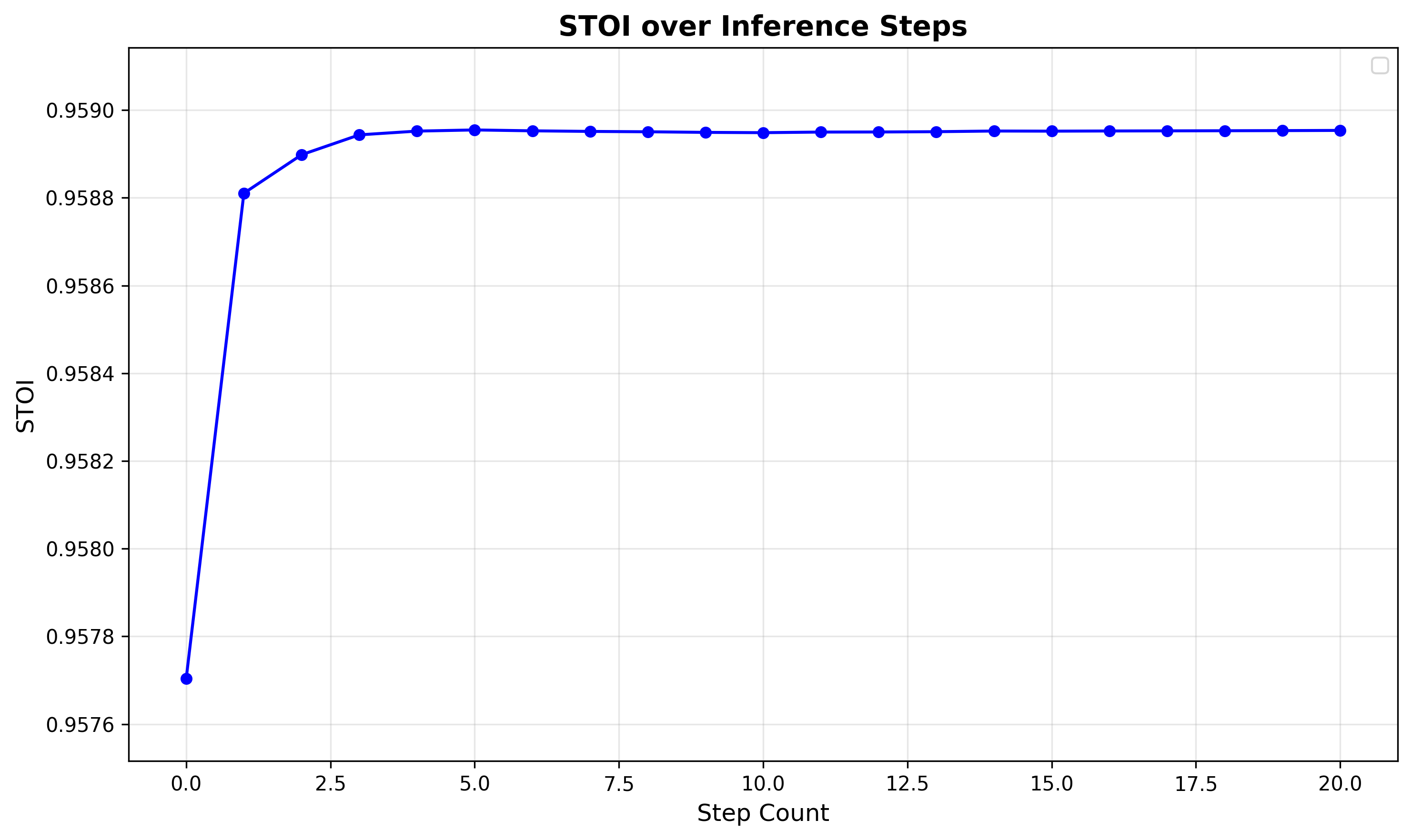}
\caption{STOI across all inference steps for VCTK-DEMAND test set.}
\label{fig:vctk-stoi}
\end{figure}

\begin{figure}[h]
\centering
\includegraphics[width=\linewidth]{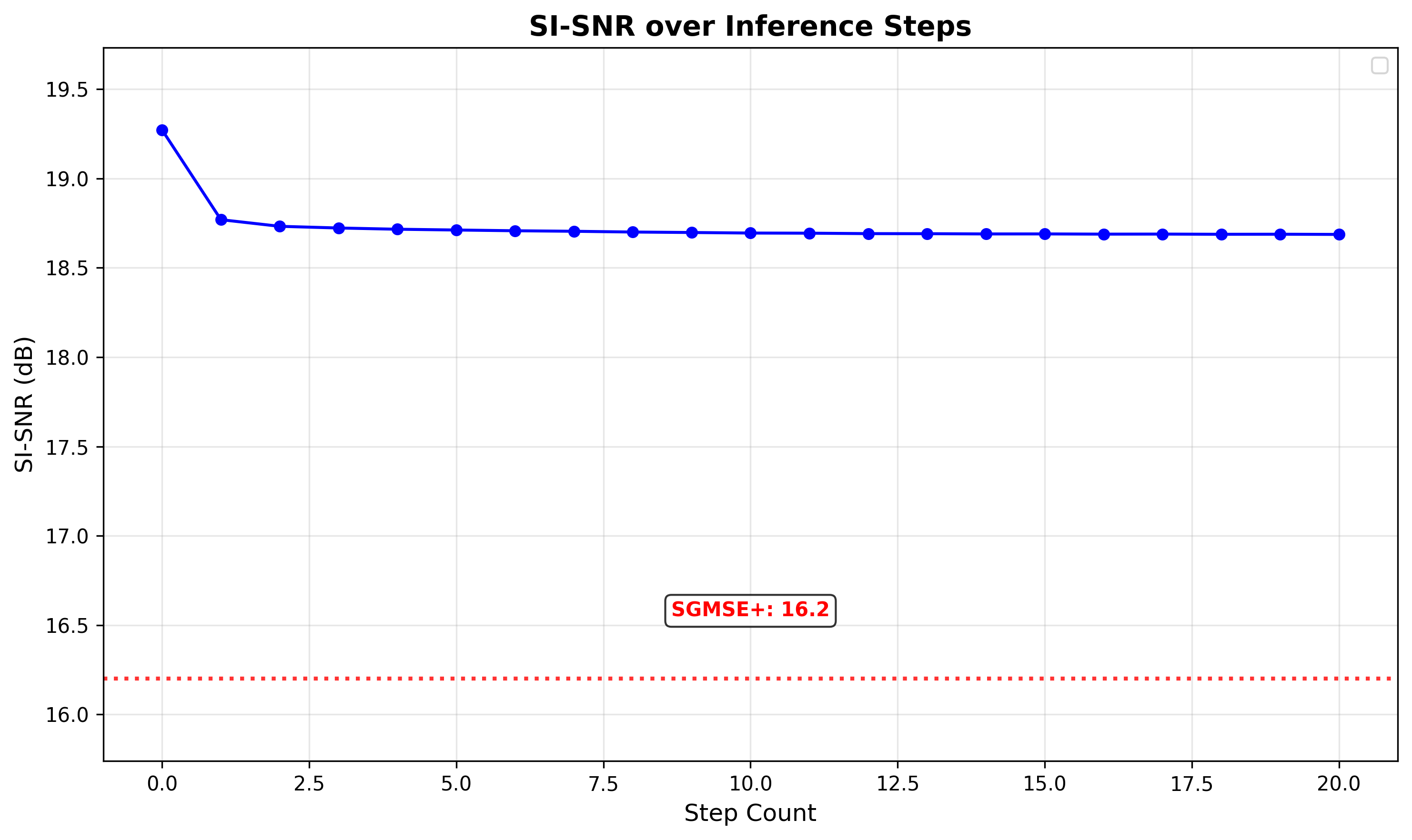}
\caption{SI-SNR across all inference steps for VCTK-DEMAND test set.}
\label{fig:vctk-sisnr}
\end{figure}

\subsection{Results on DNS Challenge V3 test set}
We report the per-step UTMOS, DNSMOS SIG, BAK, OVRL and MOS results for DNS Challenge V3 test set in Figure~\ref{fig:dns-utmos}, Figure~\ref{fig:dns-sig}, Figure~\ref{fig:dns-bak}, Figure~\ref{fig:dns-ovrl} and Figure~\ref{fig:dns-mos}. Similar to VCTK-DEMAND, we notice the results become stable after two inference steps, where the first inference step yielded the most improvements.

\begin{figure}[h]
\centering
\includegraphics[width=\linewidth]{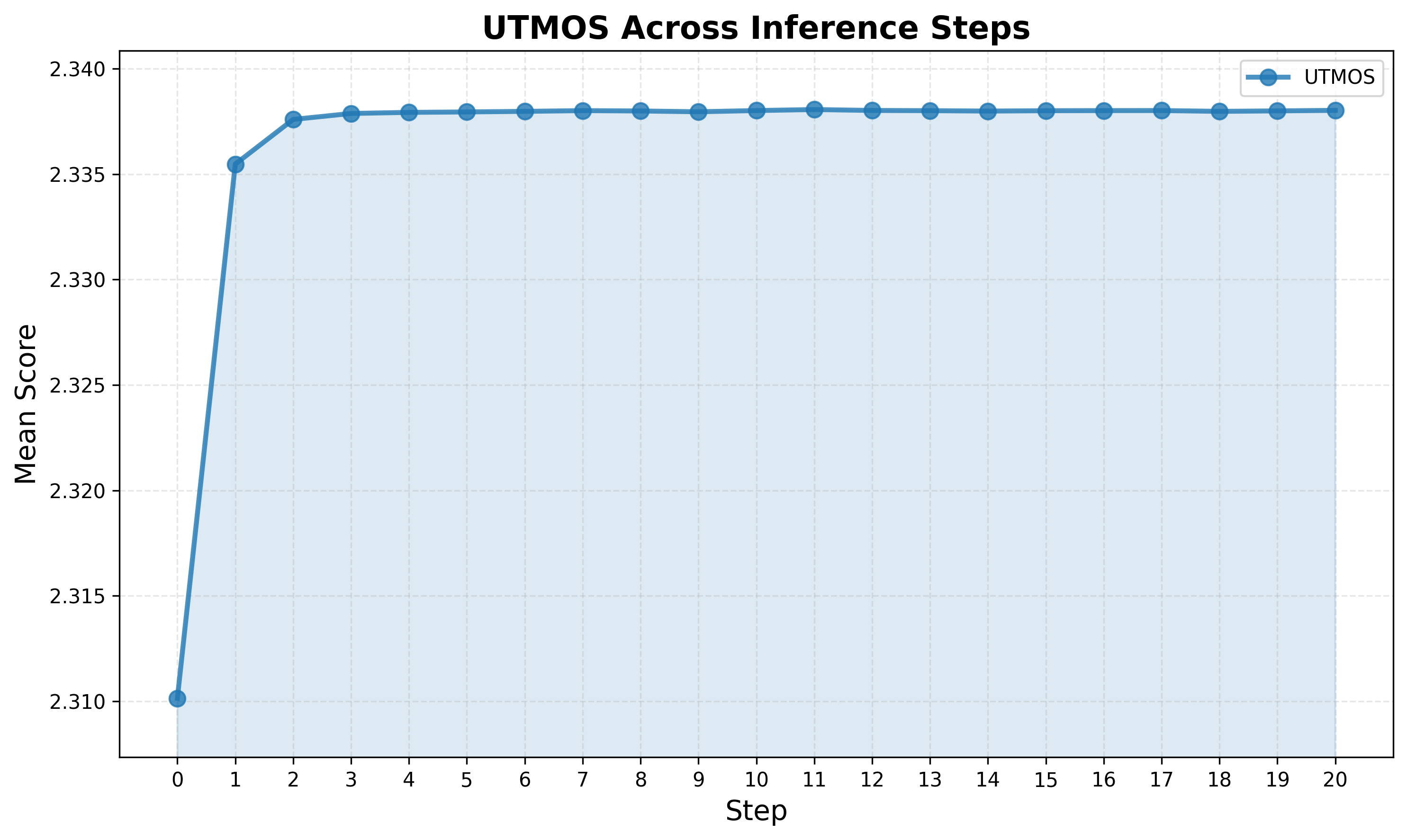}
\caption{UTMOS across all inference steps for DNS Challenge V3 test set.}
\label{fig:dns-utmos}
\end{figure}

\begin{figure}[h]
\centering
\includegraphics[width=\linewidth]{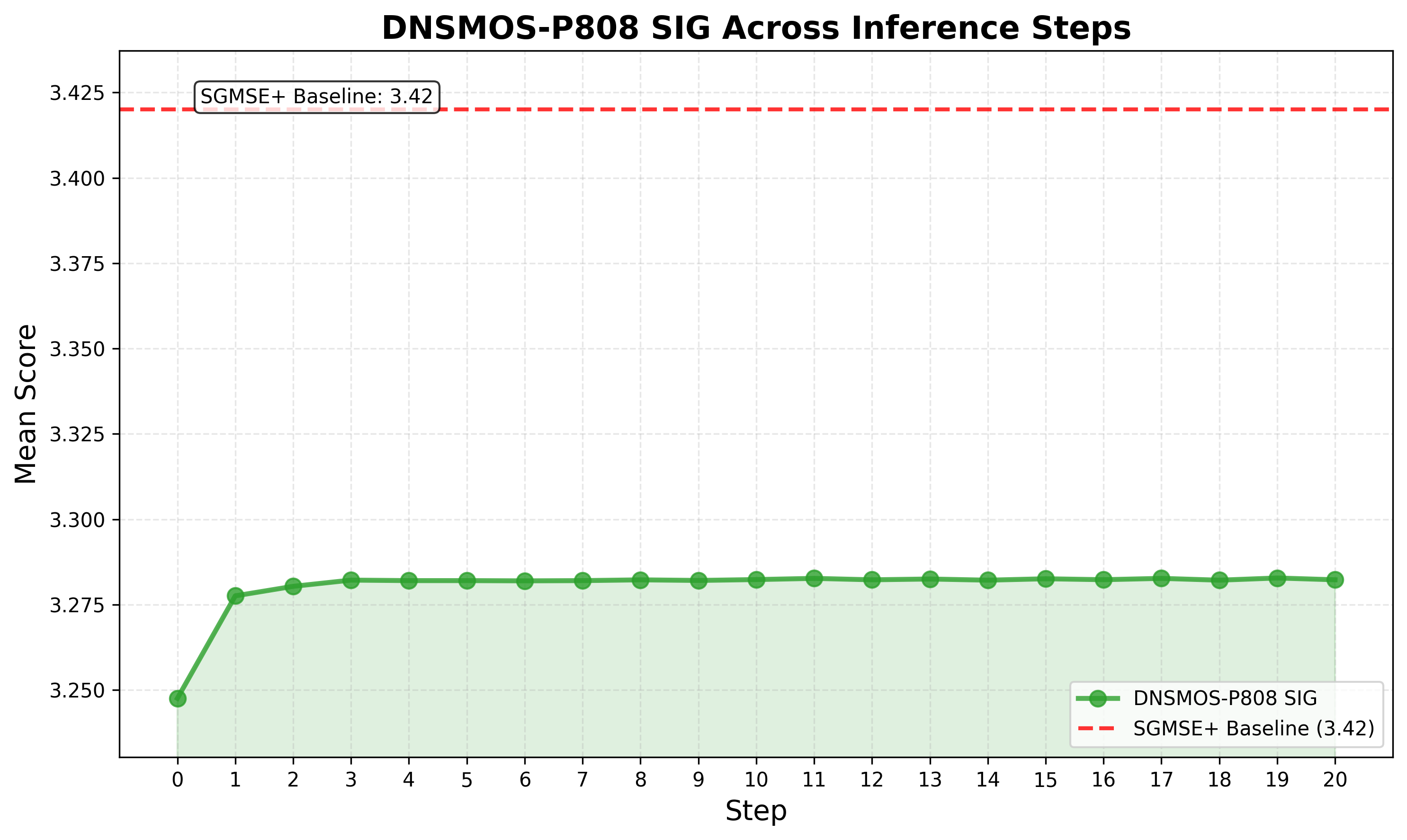}
\caption{DNSMOS SIG across all inference steps for DNS Challenge V3 test set.}
\label{fig:dns-sig}
\end{figure}

\begin{figure}[h]
\centering
\includegraphics[width=\linewidth]{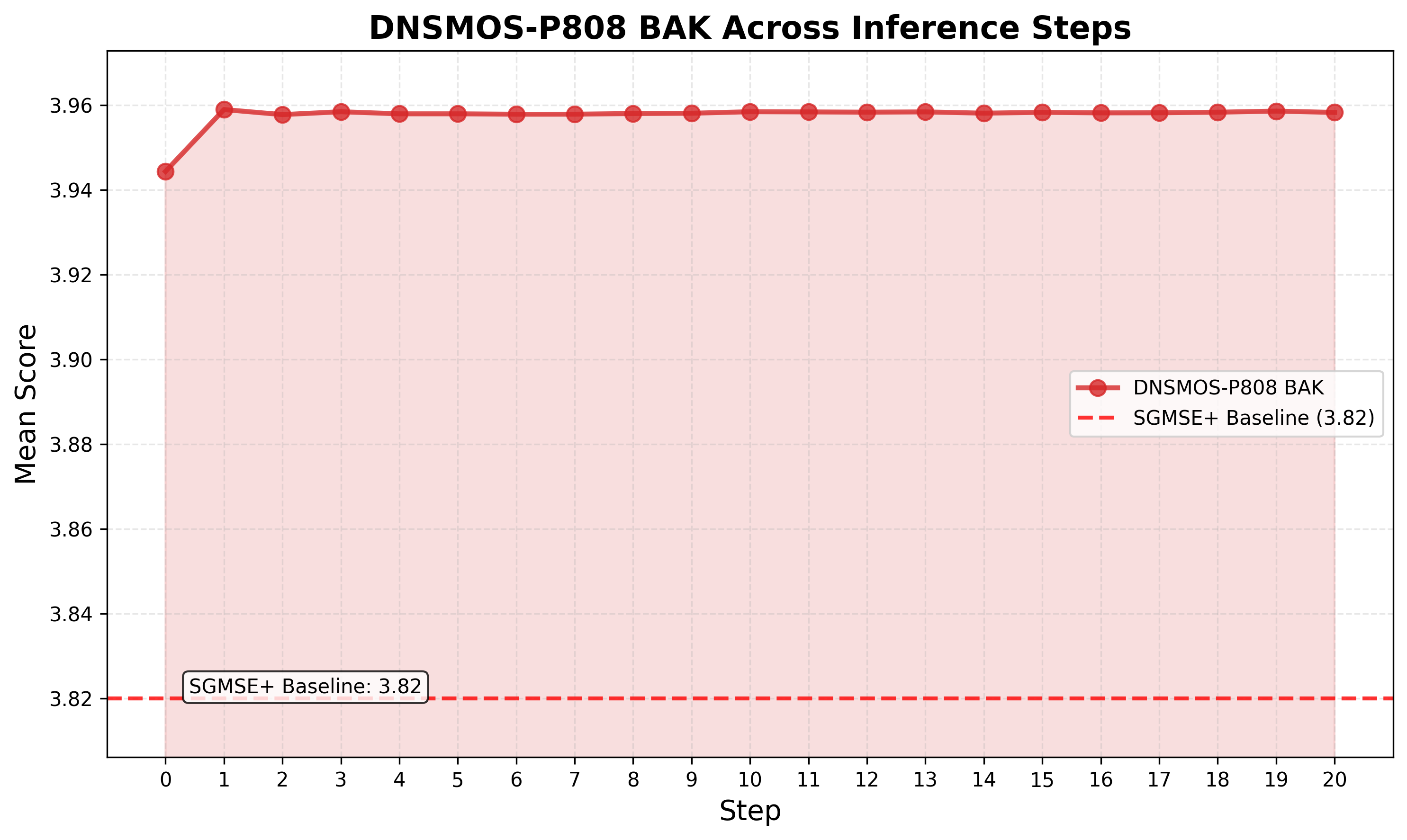}
\caption{DNSMOS BAK across all inference steps for DNS Challenge V3 test set.}
\label{fig:dns-bak}
\end{figure}

\begin{figure}[h]
\centering
\includegraphics[width=\linewidth]{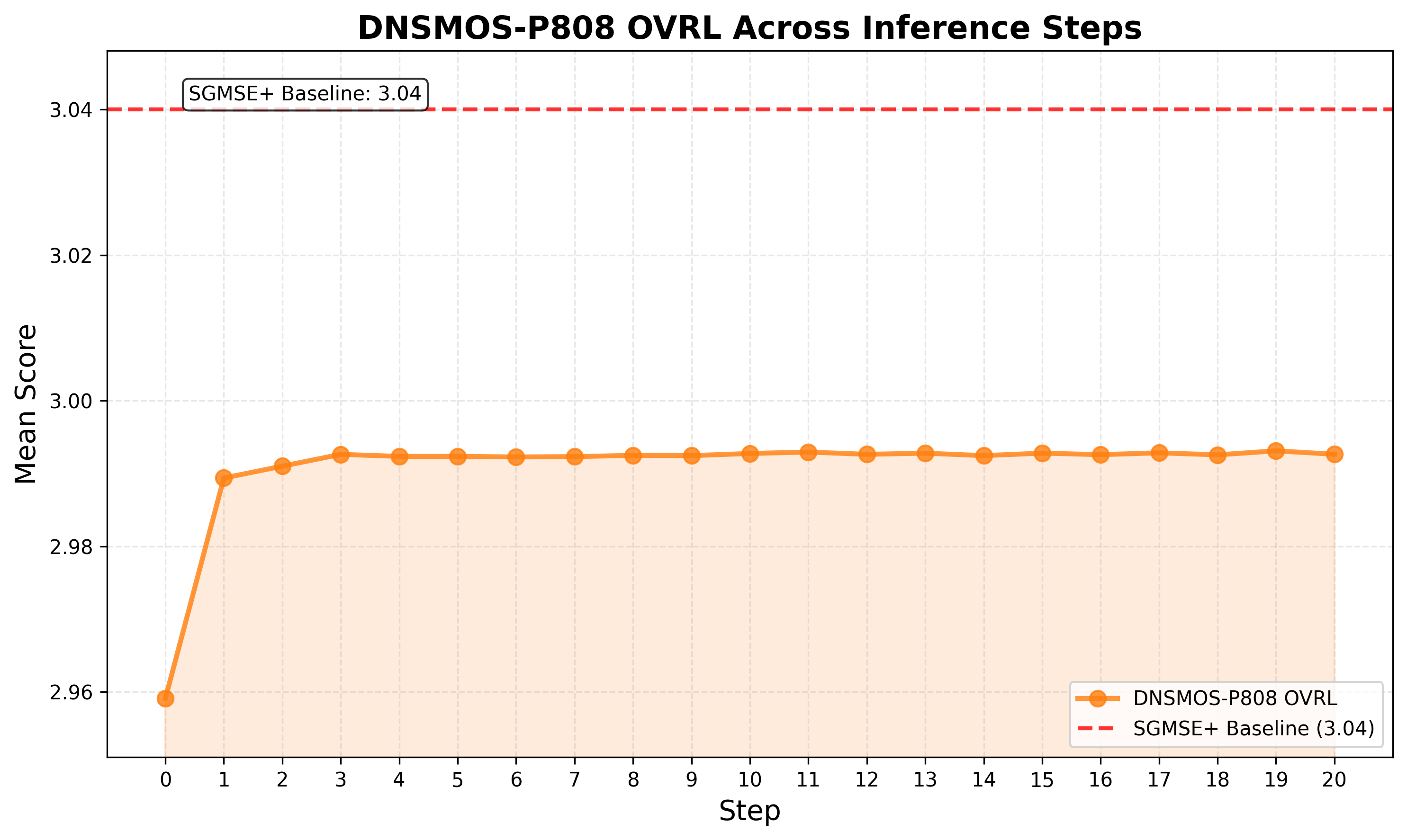}
\caption{DNSMOS OVRL across all inference steps for DNS Challenge V3 test set.}
\label{fig:dns-ovrl}
\end{figure}

\begin{figure}[h]
\centering
\includegraphics[width=\linewidth]{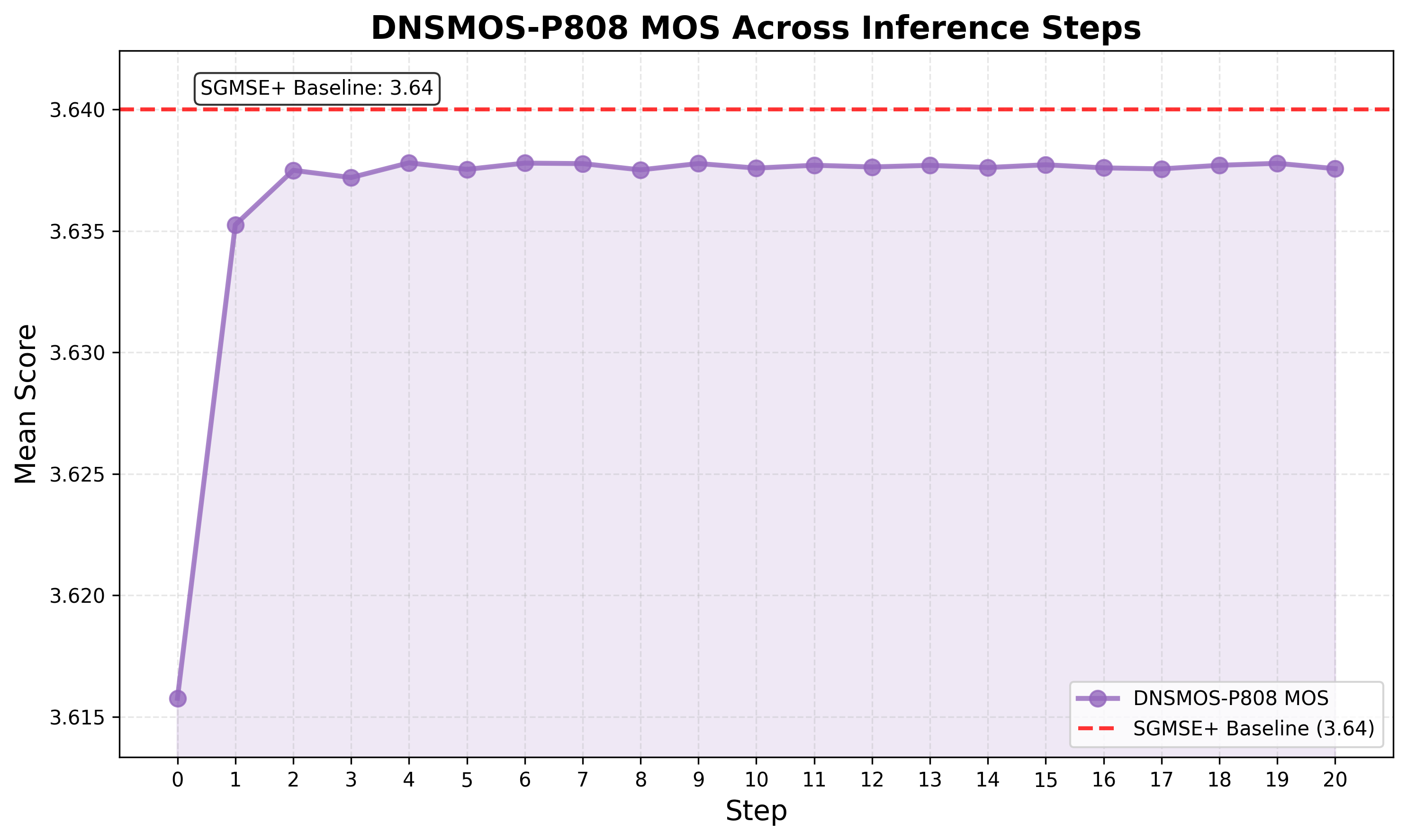}
\caption{DNSMOS MOS across all inference steps for DNS Challenge V3 test set.}
\label{fig:dns-mos}
\end{figure}

\clearpage
\section{Additional Details for Music Source Separation}
Following the approach used for speech enhancement, we present pre-step uSDR and cSDR results for each of the four stems individually in Figure~\ref{fig:mss-vocals}, Figure~\ref{fig:mss-bass}, Figure~\ref{fig:mss-drums} and Figure~\ref{fig:mss-other}. The chunked-SDR metric exhibits more pronounced fluctuations, indicating the inherent instability of taking the median of song-median chunked SDR values. In contrast, the utterance-SDR demonstrates considerably greater stability. Across both metrics, the most substantial improvements occur during the initial inference step, consistent with the behavior observed in many bridge models.

\begin{figure}[h]
\centering
\includegraphics[width=\linewidth]{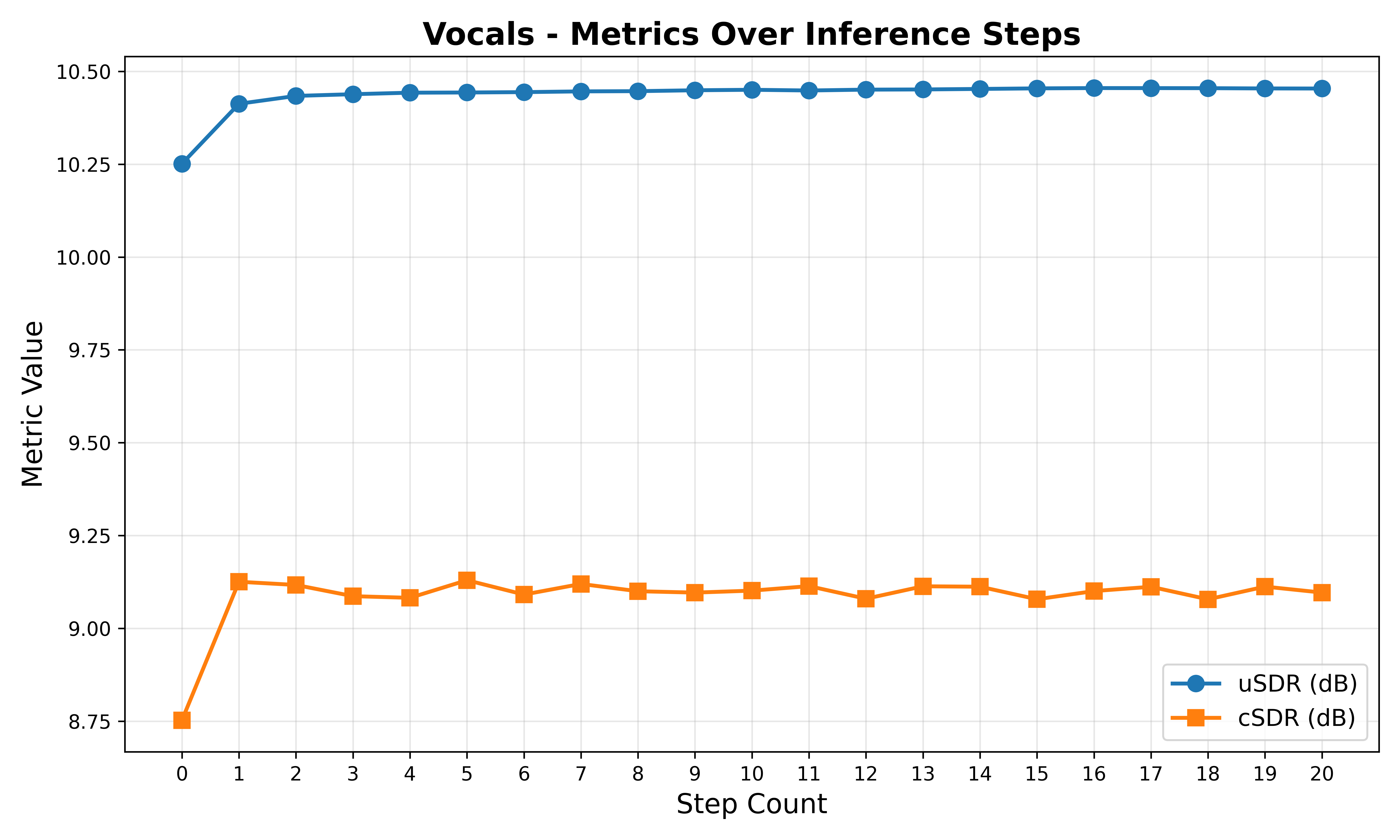}
\caption{uSDR and cSDR performance across all inference steps for ``Vocals'' stem in the music source separation task.}
\label{fig:mss-vocals}
\end{figure}

\begin{figure}[h]
\centering
\includegraphics[width=\linewidth]{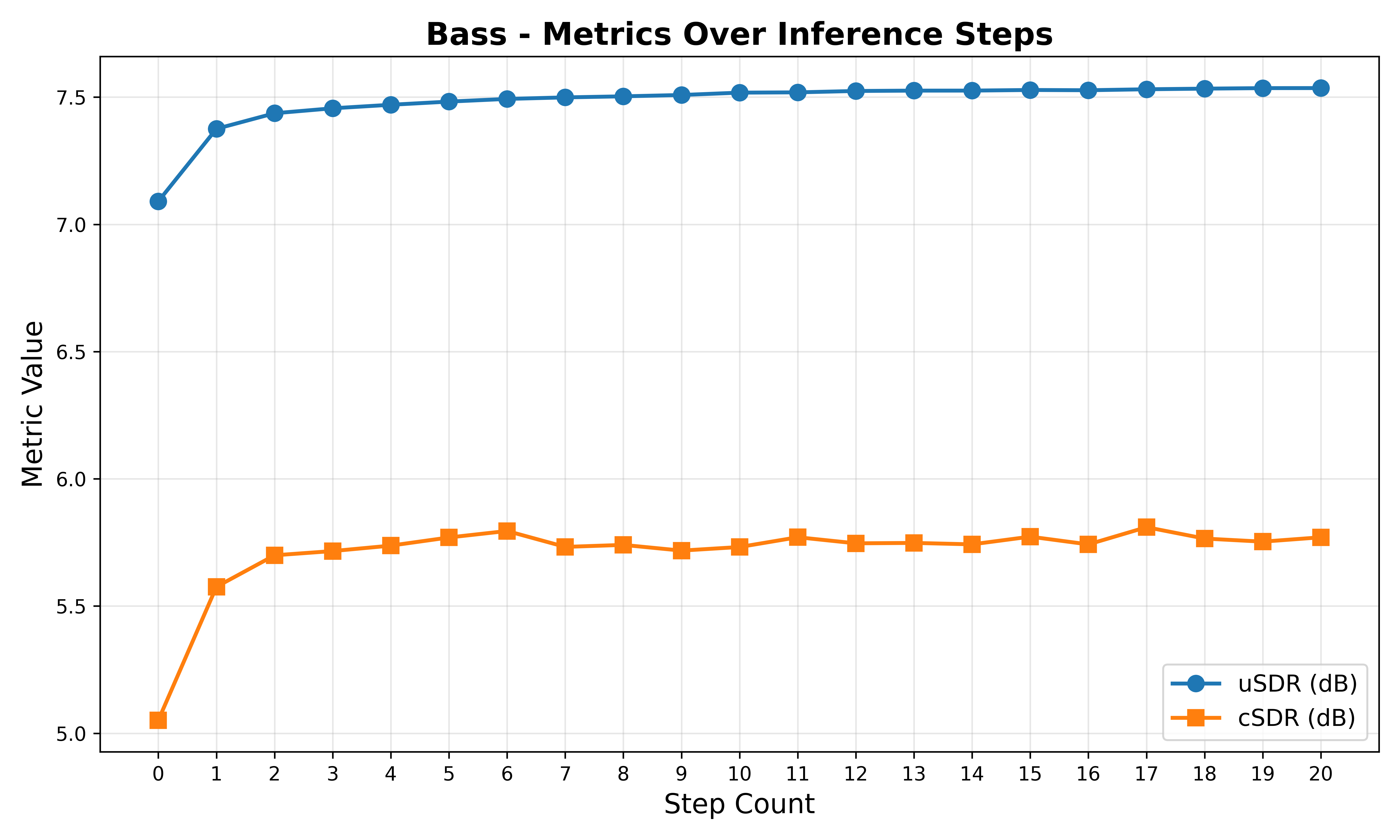}
\caption{uSDR and cSDR performance across all inference steps for ``Bass'' stem in the music source separation task.}
\label{fig:mss-bass}
\end{figure}

\begin{figure}[h]
\centering
\includegraphics[width=\linewidth]{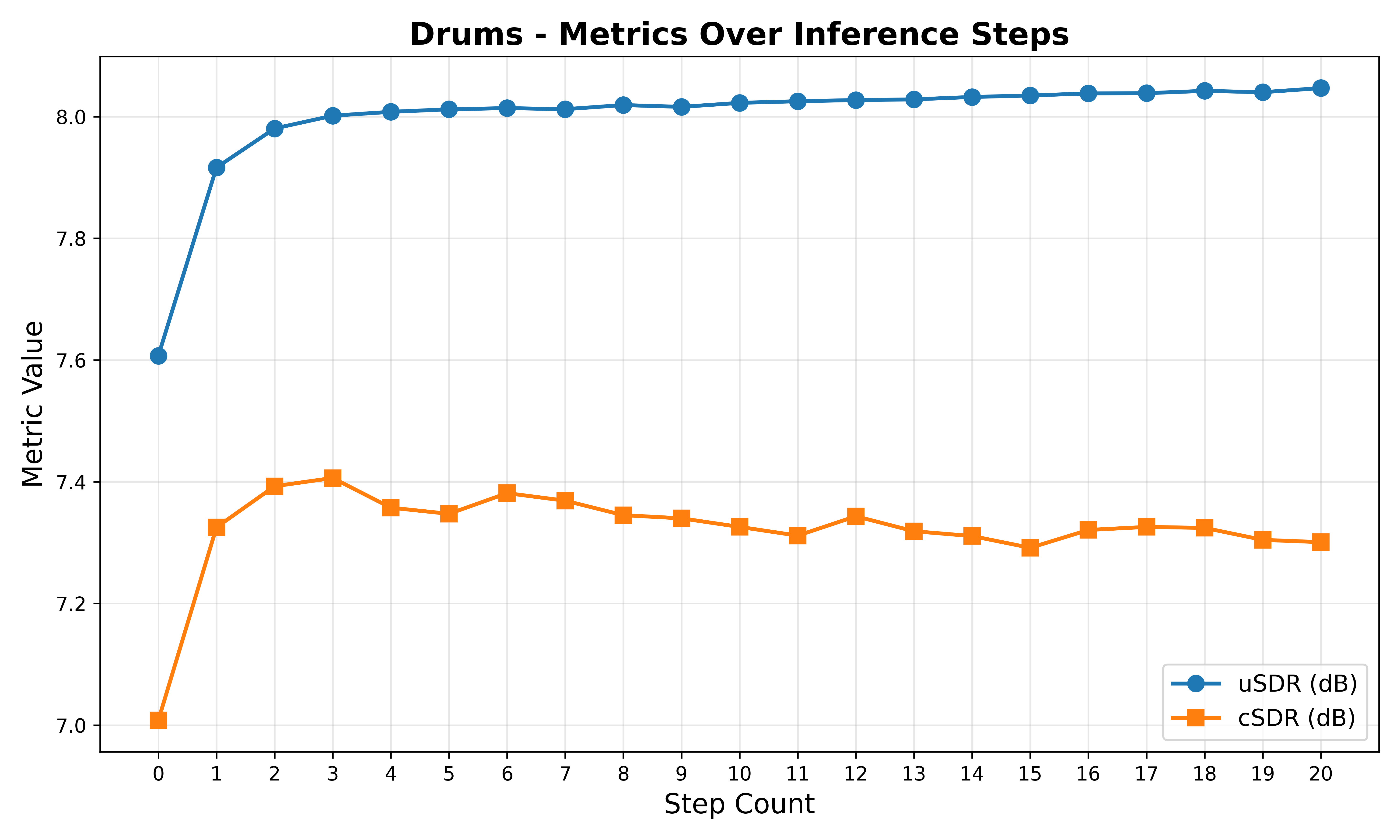}
\caption{uSDR and cSDR performance across all inference steps for ``Drums'' stem in the music source separation task.}
\label{fig:mss-drums}
\end{figure}

\begin{figure}[h]
\centering
\includegraphics[width=\linewidth]{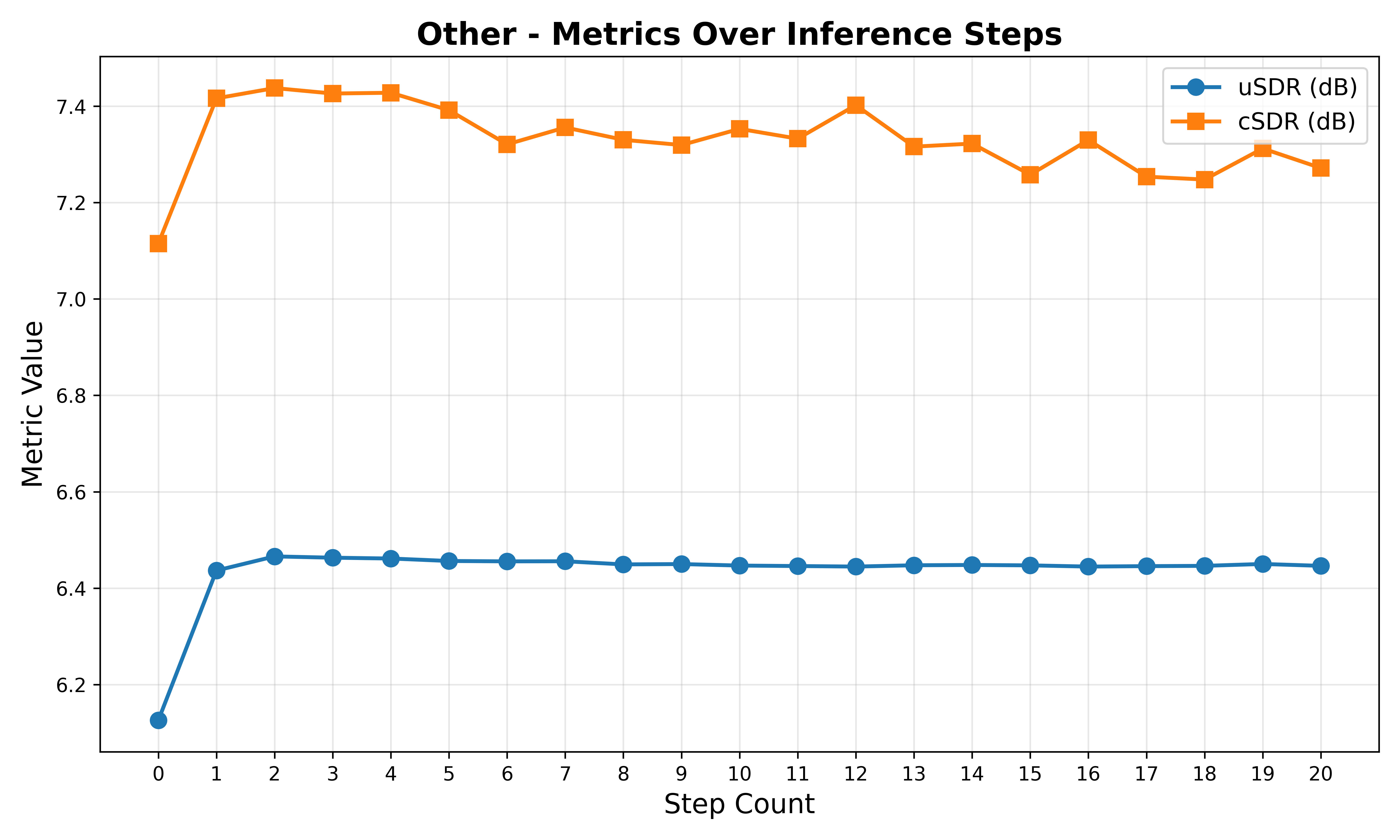}
\caption{uSDR and cSDR performance across all inference steps for ``Other'' stem in the music source separation task.}
\label{fig:mss-other}
\end{figure}

\clearpage
\section{Additional Details for Bridge Model Connection}
\subsection{Justification for Smoothing the Dirac delta}
In Section 5, we claimed that the Dirac delta in the DDBM framework can be approximated by a Gaussian with small variance. We now provide a justification for this approach.

Starting with the deterministic linear bridge:
\begin{equation}
x_t = (1-t)x_0 + tx_T
\end{equation}

To enable differentiability required for score matching, we add infinitesimal Gaussian noise:
\begin{equation}
x_t^{\epsilon} = (1-t)x_0 + tx_T + \epsilon \cdot z, \quad z \sim \mathcal{N}(0, I)
\end{equation}

This yields the conditional distribution:
\begin{equation}
q_{\epsilon}(x_t|x_0, x_T) = \mathcal{N}(x_t; (1-t)x_0 + tx_T, \epsilon^2 I)
\end{equation}

As $\epsilon \to 0$, this distribution converges to the Dirac delta in the weak topology:
\begin{equation}
\lim_{\epsilon \to 0} q_{\epsilon}(x_t|x_0, x_T) = \delta(x_t - ((1-t)x_0 + tx_T))
\end{equation}

However, for any $\epsilon > 0$, the distribution remains differentiable, enabling score computation:
\begin{equation}
\nabla_{x_t} \log q_{\epsilon}(x_t|x_0, x_T) = -\frac{1}{\epsilon^2}(x_t - ((1-t)x_0 + tx_T))
\end{equation}

Digital audio typically uses 16-bit representation, introducing an inherent noise floor at approximately -96 dB. This corresponds to $\epsilon^2 \approx 10^{-9.6}$ relative to full scale, providing a natural lower bound for the variance in our Gaussian approximation. Thus, the Gaussian smoothing with small but non-zero variance accurately models the physical reality of audio signals, where perfect deterministic values are never achieved in practice.

\subsection{Justification for Score Parameterization}

We now derive the score parameterization used in our method from first principles, showing how it naturally emerges from the audio separation objective.

Starting from the score definition with Gaussian smoothing:
\begin{equation}
\nabla_x \log q_{\epsilon}(x|x_0, x_T) = -\frac{1}{\epsilon^2}(x - \mu(x_0, x_T))
\end{equation}
where $\mu(x_0, x_T) = (1-t)x_0 + tx_T$.

For audio separation, we re-parameterize in terms of the clean signal estimate. Let $\hat{p}(x)$ denote the model's estimate of the clean signal from mixture $x$.

At optimal training, the expected value of the estimate equals the true clean signal:
\begin{equation}
\mathbb{E}[\hat{p}(x_t)] = p
\end{equation}

For the linear bridge with $x_0 = n$ (noise) and $x_T = p$ (clean signal):
\begin{equation}
\nabla_x \log q_{\epsilon} = -\frac{1}{\epsilon^2}(x - ((1-t)n + tp))
\end{equation}

For a well-trained model, we have the approximation:
\begin{equation}
\hat{p}(x) - x \approx t(p - x)
\end{equation}

This approximation holds because the model learns to estimate how much of the clean signal is present in the mixture, which is proportional to $t$ in our linear bridge.

Rearranging and substituting:
\begin{align}
s_{\omega}(x, p, t) &= -\frac{1}{\epsilon^2 t}(\hat{p}(x) - x) \\
&= \frac{1}{\epsilon^2 t}(x - \hat{p}(x))
\end{align}

In our notation where $t = \sigma$, this becomes:
\begin{equation}
s_{\omega}(y, p, \sigma) = \frac{1}{\epsilon^2 \sigma}(y - \hat{p}(y))
\end{equation}

This naturally explains the $1/\sigma$ factor in Equation (21) of the main paper, showing that our score parameterization emerges directly from the bridge diffusion framework when applied to audio separation.

\end{document}